\documentclass[12pt,twoside]{article}

\usepackage{float}
\usepackage{graphicx}
\usepackage{epstopdf}
\usepackage{graphicx}
\usepackage{epic}
\usepackage{multirow}
\usepackage{pst-poly}  
\usepackage{pst-plot}  
\usepackage{pst-poly}  
\usepackage{tikz}
\usepackage{xcolor}
\usepackage{amsmath}
\usepackage{amssymb}
\usepackage{mathrsfs}
\usepackage{makecell}
\usepackage{threeparttable}
\usepackage[figuresright]{rotating}
\usetikzlibrary{arrows,shapes,chains}
\usepackage{bm}

\renewcommand{\paragraph}{\roman{paragraph}}
\usepackage[a4paper]{geometry}
\setlength{\textwidth}{6.3in}
\setlength{\textheight}{8.8in}
\setlength{\topmargin}{0pt}
\setlength{\headsep}{25pt}
\setlength{\headheight}{0pt}
\setlength{\oddsidemargin}{0pt}
\setlength{\evensidemargin}{0pt}


\makeatletter
\renewcommand\title[1]{\gdef\@title{\reset@font\Large\bfseries #1}}
\renewcommand\section{\@startsection {section}{1}{\z@}%
	{-3.5ex \@plus -1ex \@minus -.2ex}%
	{2.3ex \@plus.2ex}%
	{\normalfont\large\bfseries}}
\renewcommand\subsection{\@startsection{subsection}{2}{\z@}%
	{-3ex\@plus -1ex \@minus -.2ex}%
	{1.5ex \@plus .2ex}%
	{\normalfont\normalsize\bfseries}}
\renewcommand\subsubsection{\@startsection{subsubsection}{3}{\z@}%
	{-2.5ex\@plus -1ex \@minus -.2ex}%
	{1.5ex \@plus .2ex}%
	{\normalfont\normalsize\bfseries}}

\def\@runningauthor{}\newcommand{\runningauthor}[1]{\def\runningauthor{#1}}
\def\@runningtitle{}\newcommand{\runningtitle}[1]{\def\runningtitle{#1}}

\renewcommand{\ps@plain}{%
	\renewcommand{\@evenhead}{\footnotesize\scshape \hfill\runningauthor\hfill}
	\renewcommand{\@oddhead}{\footnotesize\scshape \hfill\runningtitle\hfill}}

\pagestyle{plain}

\g@addto@macro\bfseries{\boldmath}

\makeatother



\usepackage{amsthm,amsmath,amssymb}
\usepackage{cite}
\usepackage{graphicx}

\usepackage[colorlinks=true,citecolor=black,linkcolor=black,urlcolor=blue]{hyperref}

\theoremstyle{plain}
\newtheorem{theorem}{Theorem}[section]

\newtheorem{lemma}[theorem]{Lemma}

\theoremstyle{definition}
\newtheorem{definition}[theorem]{Definition}
\newtheorem{example}[theorem]{Example}

\theoremstyle{remark}








\runningauthor{}

\date{}

\begin{document}

	\title{On Euclidean, Hermitian and symplectic quasi-cyclic complementary dual codes}
	\author{Chaofeng Guan, Ruihu Li,
		Zhi Ma
	}
	
	\date{}
	\maketitle
	
	\begin{abstract}
		Linear complementary dual codes (LCD) intersect trivially with their dual. In this paper, we develop a new characterization for LCD codes, which allows us to judge the complementary duality of linear codes from the codeword level. Further, we determine the sufficient and  necessary conditions for one-generator quasi-cyclic codes to be LCD codes involving Euclidean, Hermitian, and symplectic inner products. Finally, we constructed many Euclidean, Hermitian and symmetric LCD codes with excellent parameters, some improving the results in the literature. Remarkably, we construct a symplectic LCD $[28,6]_2$ code with symplectic distance $10$, which corresponds to an trace Hermitian additive complementary dual $(14,3,10)_4$ code that outperforms the optimal quaternary Hermitian LCD $[14,3,9]_4$ code.
		
	\end{abstract}
	{\bf Keywords:} quasi-cyclic codes, Euclidean, Hermitian, symplectic, complementary dual codes.\\
	{\bf AMS Classification (MSC 2020)}: 94B05, 15B05, 12E10
	
	\section{Introduction}\label{sec1}
	Linear complementary dual codes (LCD) intersect their dual codes trivially.
	LCD codes have been used extensively in data storage, communication systems, consumer electronics, and cryptography \cite{Massey1992LinearCW,carlet2016complementary}. In \cite{Massey1992LinearCW}, Massey showed that LCD codes provide an optimal linear coding scheme for a two-user binary adder channel. 
	In  \cite{carlet2016complementary}, Carlet et al. studied the application of binary LCD codes in countering side channel and fault injection attacks. 
	Due to the critical application of LCD codes, much research has been conducted on LCD codes \cite{li2017lcd,dougherty2017combinatorics,galvez2018some,carlet2018euclidean,li2018hermitian,araya2019quaternary,LiThehulloftwo,li2023improved}. 
	
	Notably, in \cite{tang2018linear}, Carlet et al. proved that $\mathbb{F}_q$-codes, $q\ge 4$, are equivalent to Euclidean LCD codes, while $\mathbb{F}_{q^2}$-codes, $q\ge 3$, are equivalent to Hermitian LCD codes.
	 Therefore, most research on LCD codes is currently focused on small fields. 
	Bouyuklieva \cite{bouyuklieva_optimal_2021}, Harada \cite{harada_construction_2021}, Ishizuka et al. \cite{ishizuka_construction_2021,ishizuka2022construction}, Li et al. \cite{Li2022ImprovedLA,Li2023OnTC}, and wang et al. \cite{Wang2023NewCO} constructed much good LCD codes, established several LCD code tables with short lengths.
	In addition, Shi et al. \cite{ShiAdditive_complementary2023,Shi2022AdditiveCD}  introduced additive complementary dual codes (ACD) for security applications that still make sense. With \cite{calderbank1998quantum}, binary symplectic inner product and quaternary trace Hermitian inner product are equivalent, so a $\mathbb{F}_2^{2n}$-symplectic LCD code is equivalent to a $\mathbb{F}_{4}^{n}$-trace Herimtian LCD code.
	Therefore, the construction of the symplectic LCD code is also of significant importance. 
	Xu et al. \cite{xu_constructions_2021} constructed a class of symplectic LCD MDS codes.
	In \cite{Huang2022}, Huang et al. construct some good low-dimensional quasi-cyclic symplectic LCD codes over $\mathbb{F}_{2^{r}}$.

	In \cite{Yang1994TheCF}, Yang and Massey provided a sufficient and necessary condition for cyclic codes to be Euclidean LCD codes.
	Esmaeili et al. \cite{Esmaeili2009OnCQ} studied a sufficient condition for quasi-cyclic codes to be Euclidean LCD codes and gave a method for constructing quasi-cyclic Euclidean LCD codes.
	In \cite{guneri2016quasi}, Güneri et al. studied quasi-cyclic codes with Euclidean and Hermitian complementary duals employing their concatenation structure and obtained the sufficient and necessary condition for a class of one-generator quasi-cyclic codes with index two to be LCD codes.
	Later, Alahmadi et al. \cite{alahmadi2020complementary} propose the sufficient and necessary condition for a class of one-generator multinegacirculant codes (a subclass of quasi-twisted codes with twisting constant $-1$) with index $t$ to be Euclidean LCD.
	However, many important issues remain regarding developing LCD codes from quasi-cyclic codes. One of the most critical issues is determining the sufficient and necessary conditions for quasi-cyclic codes to be LCD codes so that we can construct quasi-cyclic LCD codes more efficiently.

	The main goal of this paper is to investigate quasi-cyclic Euclidean, Hermitian, and symplectic LCD codes. We ascertain the sufficient and necessary conditions for one-generator quasi-cyclic codes to be  Euclidean, Hermitian, and symplectic LCD codes. More precisely, we answer the following two questions:
	
	\textbf{1. What polynomials can be applied to construct quasi-cyclic LCD codes?}
	
	\textbf{2. How to use polynomials to construct quasi-cyclic LCD codes?}
	
	Firstly, we give a new characterization of LCD codes that allows the treatment of LCD codes at the codeword level. 
	Then, by decomposing the codeword space of quasi-cyclic code, we obtain the sufficient and necessary conditions for them to intersect their dual trivially under Euclidean, Hermitian, and symplectic inner products. 
	Finally, we present a practical method for constructing LCD codes using quasi-cyclic codes and construct many good quasi-cyclic Euclidean, Hermitian, and symplectic LCD codes.
	
	The paper is organized as follows. Section \ref{sec2} gives preliminaries and background on quasi-cyclic codes, Euclidean, Hermitian, and symplectic LCD codes. In Section \ref{sec3} and \ref{sec4}, we redescribe LCD codes in terms of codewords and identify the sufficient and necessary conditions for the quasi-cyclic codes to be LCD codes under Euclidean, Hermitian, and symplectic inner products, respectively. 
	We also construct many good quasi-cyclic Euclidean, Hermitian, and symplectic LCD codes.
	Finally, we give concluding remarks in Section \ref{sec6}. 
	All calculations in this paper are done with the algebraic computer system Magma \cite{bosma1997magma}.

	\section{Preliminaries}\label{sec2}
	In this section, we introduce some basic concepts of quasi-cyclic codes, Euclidean, Hermitian, and symplectic LCD codes. For more details, we refer the reader to the standard handbooks \cite{huffman2010fundamentals,huffman2021concise}.
	\subsection{Basics of linear codes}
	Throughout this paper, $p$ is a prime, and $\mathbb{F}_q$ is the finite field of order $q$, where $q=p^r$ for some positive integer $r$.
	A $[\ell n, k]_q$ linear code $\mathscr{C}$ over $\mathbb{F}_{q}$ is a linear subspace of $\mathbb{F}_{q}^{\ell n}$ of dimension $k$.
	Let $\bm{u}=(u_0,u_1,\dots,u_{\ell n -1}) \in \mathscr{C}$, then Hamming weight of $\bm{u}$ is $\mathbf{w}_{H}(\bm{u})=\#\{i \mid u_{i}\neq0 , 0 \leq i \leq \ell n-1\}$. 
	If minimum Hamming distance of $\mathscr{C}$ is $d_H=\min\{\mathbf{w}_{H}(\bm{u})\mid \bm{u} \in \mathscr{C}\setminus \{\mathbf{0}\}\}$, then $\mathscr{C}$ can be written as $[\ell n,k,d_H]_q$.
	If $\ell$ is even, let $N=\ell n/2$, then symplectic weight of $\bm{u}$ is 
	$\mathbf{w}_{s}(\bm{u})=\#\{i \mid (u_{i}, u_{N+i}) \neq(0,0) , 0 \leq i \leq N-1\}$.
	Analogously, if minimum symplectic weight of $\mathscr{C}$ is $d_{s}(\mathscr{C})=\min \{\mathbf{w}_{s}(\bm{u}) \mid \bm{u} \in \mathscr{C}\setminus \{\mathbf{0}\}\}$, then we denote $\mathscr{C}$ as $[\ell n,k,d_s]_q^s$. 
	
	The Euclidean inner product of  $\bm{x}=(x_{0},  \ldots, x_{\ell n-1})$, 
	$\bm{y}=(y_{0},\ldots, y_{\ell n-1}) \in \mathbb{F}_q^{\ell n}$ is defined as:
	
	\begin{equation}
		\langle\bm{x}, \bm{y}\rangle_{e}=\sum_{i=0}^{\ell n-1} x_{i} y_{i}.
	\end{equation} 
	
	Similarly, the Hermitian inner product of  $\bm{x},\bm{y} \in \mathbb{F}_{q^2}^{\ell n}$ is defined as:
	\begin{equation}
		\langle\bm{x}, \bm{y}\rangle_{h}=\sum_{i=0}^{\ell n-1} x_{i} y_{i}^{q}.
	\end{equation} 
	
	If $\ell$ is even, then symplectic inner product of $\bm{x},\bm{y} \in \mathbb{F}_{q}^{\ell n}$ is:
	
	\begin{equation}
		\langle \bm{x},\bm{y}\rangle_{s}=\sum_{i=0}^{N-1}\left(x_{i} y_{N+i}-x_{N+i} y_{i}\right).
	\end{equation}
	
	%
	%

	The Euclidean dual of $\mathscr{C}$ is $\mathscr{C}^{\perp_{e}}=\{\bm{v} \in \mathbb{F}_{q}^{\ell n} \mid\langle\bm{u}, \bm{v}\rangle_{e}=0, \forall \bm{u} \in \mathscr{C}\}$; if $\ell n$ is even, then the symplectic dual of $\mathscr{C}$ is $\mathscr{C}^{\perp_{s}}= \{\bm{v} \in \mathbb{F}_{q}^{\ell n} \mid\langle\bm{u}, \bm{v}\rangle_{s}=0, \forall \bm{u} \in \mathscr{C}\}$; 
	if $q=p^2$, then the Hermitian dual of $\mathscr{C}$ is $\mathscr{C}^{\perp_{h}}=\{\bm{v} \in \mathbb{F}_{q^2}^{\ell n} \mid\langle\bm{u}, \bm{v}\rangle_{h}=0, \forall \bm{u} \in \mathscr{C}\}$.
	$\mathscr{C}$ is an LCD code if and only if $\mathscr{C} \cap \mathscr{C}^{\perp_*}=\{\mathbf{0}\}$, where ``$\perp_*$" represents one of Euclidean, Hermitian and symplectic dual. 
	\subsection{Basics of quasi-cyclic codes}
	Cyclic codes are a class of linear codes closed under the shift operator $\tau _{1}$. 
For $\bm{x}=(x_{0}, x_{1}, \ldots, x_{n-1}) \in \mathbb{F}_{q}^{n}$, we denote
$\tau _{1}(\bm{x})=\left(x_{n-1}, x_{0}, \ldots, x_{n-2}\right).$
If $\mathscr{C}=\tau _{1}(\mathscr{C})$, then $\mathscr{C}$ is called a cyclic code. 
Let $\mathbb{R}=\mathbb{F}_{q}[x] /\left\langle x^{n}-1\right\rangle$, and define a mapping $\varphi_{1}$ as follows,
\begin{equation}
	\begin{aligned}
		\varphi_{1}: \mathbb{F}_{q}^{n} & \rightarrow \mathbb{R} \\
		\left(c_{0}, c_{1}, \ldots, c_{n-1}\right) & \mapsto c_{0}+c_{1} x+\cdots+c_{n-1} x^{n-1}
	\end{aligned}
\end{equation}

Clearly, $\varphi_{1}$ is an isomorphism of $\mathbb{F}_{q}$-modules and a cyclic code $\mathscr{C}$ of length $n$ is an ideal of the quotient ring $\mathbb{R}$.
Furthermore, a cyclic code $\mathscr{C}$ can be generated by a monic divisor $g(x)$ of $x^n -1$. The polynomial $g(x)$ is called the generator polynomial of $\mathscr{C}$, and the dimension of  $\mathscr{C}$ is $n-deg(g(x))$. 
Let $h(x)=x^n-1/g(x)$ and $\tilde{h}(x)=x^{\deg(h(x))} h(x^{-1})$, then Euclidean dual code of $\mathscr{C}$ is cyclic code with generator polynomial $g^{\perp_e}(x)=\tilde{h}(x)$.
Let $g^{q}(x)=g^q_{0}+g^q_{1} x+g^q_{2} x+\cdots+ g^q_{n-1} x^{n-1}$.
If $\mathscr{C}$ is a cyclic codes over $\mathbb{F}_{q^2}$, then Hermitian dual code of $\mathscr{C}$ is cyclic code generated by $g^{\perp_h}(x)=\tilde{h}^q(x)$.

Let $\bm{x}=\left(x_{0}, x_{1}, \ldots, x_{\ell n-1}\right) \in \mathbb{F}_{q}^{\ell n}$, and
$\tau _{2}(\bm{x})=(x_{n-1},x_0, \ldots, x_{n-2}, x_{2n-1}, x_{n}, \ldots, x_{2n-2},$ $\ldots,  x_{n\ell-1}, x_{(n-1)\ell}, \ldots, x_{n\ell-2}).$
If $\mathscr{C}=\tau _{2}(\mathscr{C})$, A linear space $\mathscr{C} \subset \mathbb{F}_{q}^{\ell n}$  said to be a quasi-cyclic code of index $\ell$. Define an $\mathbb{F}_q$ -module isomorphism $\varphi_{2}$ from $\mathbb{F}_{q}^{\ell n}$ to $\mathbb{R}^{\ell}$,

$$\begin{aligned}
	&	\varphi_{2}:   \mathbb{F}_{q}^{\ell n}  \rightarrow \mathbb{R}^{\ell}=\mathbb{R}\oplus \mathbb{R} \oplus \dots \oplus \mathbb{R}\\
	&\left(c_{0}, \ldots, c_{n-1}, c_{n}, \ldots, c_{2n-1}, \ldots, c_{\ell (n-1)}, \ldots, c_{\ell n-1}\right) \\
	&\mapsto\left(c_{0}(x), c_{1}(x), \ldots, c_{\ell-1}(x)\right),
\end{aligned}$$
	where  $c_{i}(x)=\sum_{t=0}^{n-1} c_{i, t} x^{t}, i=0,1, \ldots, \ell-1$.
	Algebraically, a quasi-cyclic code $\mathscr{C}$ is a submodule of $\mathbb{R}^{\ell}$.
	\section{New characterization of complementary dual codes}\label{sec3}
	This section gives a new characterization of LCD codes in terms of codewords, laying the foundation for further proof. First, we make a convention for representing inner products, where ``$l$" denotes one of Euclidean or Hermitian inner products, and ``$*$" denotes one of Euclidean, Hermitian, and symplectic products.

	\begin{lemma}\label{LCD_def}
		Let $\mathscr{C}$ be a linear code over $\mathbb{F}_{q}$, then $\mathscr{C}$ is an LCD code under the inner product ``$*$" if and only if $\forall c_1\in \mathscr{C}\setminus \{\mathbf{0}\}$, $\exists c_2 \in \mathscr{C}, \langle c_1,  c_2\rangle_{*}\ne0$ holds.
	\end{lemma}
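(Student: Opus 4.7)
The lemma is essentially a restatement of the LCD property in terms of the non-degeneracy of the inner product restricted to $\mathscr{C}$, so my plan is to prove both directions by direct unpacking of the definition of $\mathscr{C}^{\perp_*}$, with the contrapositive used in each direction.

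For the forward direction, I assume $\mathscr{C} \cap \mathscr{C}^{\perp_*} = \{\mathbf{0}\}$ and fix an arbitrary $c_1 \in \mathscr{C}\setminus\{\mathbf{0}\}$. Since $c_1 \neq \mathbf{0}$, the hypothesis forces $c_1 \notin \mathscr{C}^{\perp_*}$. By the definition of the dual code under the inner product ``$*$'', this means it is not the case that $\langle c_1, c_2\rangle_* = 0$ for every $c_2 \in \mathscr{C}$; hence there exists at least one $c_2 \in \mathscr{C}$ with $\langle c_1, c_2\rangle_* \neq 0$, which is exactly what we need.

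For the reverse direction, I argue by contradiction. Suppose the codeword-level condition holds but $\mathscr{C}$ is not LCD, so there exists $c \in (\mathscr{C} \cap \mathscr{C}^{\perp_*})\setminus\{\mathbf{0}\}$. Applying the hypothesis to $c_1 := c$, we obtain some $c_2 \in \mathscr{C}$ with $\langle c, c_2\rangle_* \neq 0$. But $c \in \mathscr{C}^{\perp_*}$ means $\langle c, c'\rangle_* = 0$ for every $c' \in \mathscr{C}$, contradiction. Hence $\mathscr{C} \cap \mathscr{C}^{\perp_*} = \{\mathbf{0}\}$, i.e., $\mathscr{C}$ is LCD.

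There is no real obstacle here: both implications follow immediately from the definition of the dual $\mathscr{C}^{\perp_*}$ as the annihilator of $\mathscr{C}$. The only point to be slightly careful about is that the argument must be inner-product-agnostic, covering the Euclidean, Hermitian, and symplectic cases uniformly; this is fine because the proof never needs symmetry or non-alternation of ``$*$'' — it only uses the definition of $\mathscr{C}^{\perp_*}$ on the right-hand argument, and the symplectic case (where $\langle c,c\rangle_s=0$ automatically) creates no issue because we allow $c_2$ to range over all of $\mathscr{C}$, not just over $\{c_1\}$.
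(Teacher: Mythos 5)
Your proof is correct and follows essentially the same route as the paper's, which simply notes that $\mathscr{C}\cap\mathscr{C}^{\perp_*}=\{\mathbf{0}\}$ is equivalent to every nonzero $c_1\in\mathscr{C}$ lying outside $\mathscr{C}^{\perp_*}$, i.e.\ admitting a witness $c_2\in\mathscr{C}$ with $\langle c_1,c_2\rangle_*\ne 0$; you merely spell out the two directions (the converse by contradiction) that the paper compresses into one line. Your closing remark about the argument being inner-product-agnostic is a worthwhile addition, since passing between $c_1$ in the first and second slot of $\langle\cdot,\cdot\rangle_*$ silently uses that all three forms satisfy $\langle x,y\rangle_*=0\Leftrightarrow\langle y,x\rangle_*=0$, a point the paper does not make explicit.
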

	\begin{proof}
		It is obvious that $\mathscr{C}\cap\mathscr{C}^{\perp_*}=\{\mathbf{0}\}$ is equivalent with $\forall c_1\in \mathscr{C}\setminus \{\mathbf{0}\}$, $c_1\notin \mathscr{C}^{\perp_*}$. Moreover, $c_1\notin \mathscr{C}^{\perp_*}$ is equivalent with $\exists c_2 \in \mathscr{C}, \langle c_1,  c_2 \rangle_{*}\ne0$, so $\mathscr{C}$ is LCD equivalent with $\forall c_1 \in \mathscr{C}\setminus \{\mathbf{0}\}$, $\exists c_2 \in \mathscr{C}, \langle c_1,  c_2 \rangle_{*}\ne0$.
	\end{proof}
	For ease of presentation, we give the following definition.
	\begin{definition}
		Let $\mathscr{C}_1$ and $\mathscr{C}_2$ be linear codes over $\mathbb{F}_{q}$, if the following conditions hold:
		
		\begin{equation}
			\left\{
			\begin{array}{c} 
				\forall c_1\in \mathscr{C}_1\setminus \{\mathbf{0}\}, \exists c_2 \in \mathscr{C}_2, \langle c_1,  c_2\rangle_{*}\ne0, \\ 
				\forall c_2\in \mathscr{C}_2\setminus \{\mathbf{0}\}, \exists c_1 \in \mathscr{C}_1, \langle c_1,  c_2\rangle_{*}\ne0,
			\end{array}
			\right.
		\end{equation}	
		then we call $\mathscr{C}_1$ and $\mathscr{C}_2$ \textbf{completely non-orthogonal} to each other under inner product ``$*$".
	\end{definition}
	
	\begin{lemma}
		Let $\mathscr{C}_1$ and $\mathscr{C}_2$ be linear codes over $\mathbb{F}_{q}$, then $\mathscr{C}_1$ and $\mathscr{C}_2$ completely non-orthogonal to each other holds if and only if $\mathscr{C}_1 \cap \mathscr{C}_2^{\perp_*}=\{\mathbf{0}\}$ and $\mathscr{C}_1^{\perp_*} \cap \mathscr{C}_2=\{\mathbf{0}\}$.
	\end{lemma}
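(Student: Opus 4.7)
The plan is to prove both directions by essentially unpacking definitions, in parallel to the proof of Lemma \ref{LCD_def} (which is the special case $\mathscr{C}_1 = \mathscr{C}_2$). The key observation driving everything is that, for each of the three inner products ``$*$'', membership $c \in \mathscr{D}^{\perp_*}$ is equivalent to $\langle c, d\rangle_* = 0$ for every $d \in \mathscr{D}$ (for the symplectic case this uses skew-symmetry of $\langle\cdot,\cdot\rangle_s$, and for the Hermitian case it uses the conjugate-linear convention fixed in the preliminaries).

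For the forward direction, I would assume that $\mathscr{C}_1$ and $\mathscr{C}_2$ are completely non-orthogonal and show both intersections are trivial by contradiction. Take any $c \in \mathscr{C}_1 \cap \mathscr{C}_2^{\perp_*}$. If $c \neq \mathbf{0}$, then $c \in \mathscr{C}_1 \setminus \{\mathbf{0}\}$, so by the first clause of complete non-orthogonality there exists $c_2 \in \mathscr{C}_2$ with $\langle c, c_2\rangle_* \neq 0$. But $c \in \mathscr{C}_2^{\perp_*}$ forces $\langle c, c_2\rangle_* = 0$, a contradiction. Hence $\mathscr{C}_1 \cap \mathscr{C}_2^{\perp_*} = \{\mathbf{0}\}$. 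The argument for $\mathscr{C}_1^{\perp_*} \cap \mathscr{C}_2 = \{\mathbf{0}\}$ is symmetric, using the second clause of the definition.

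For the reverse direction, I would assume both intersections are trivial and verify both clauses of complete non-orthogonality directly. Let $c_1 \in \mathscr{C}_1 \setminus \{\mathbf{0}\}$. Since $\mathscr{C}_1 \cap \mathscr{C}_2^{\perp_*} = \{\mathbf{0}\}$, we have $c_1 \notin \mathscr{C}_2^{\perp_*}$, so by the characterization above there must exist some $c_2 \in \mathscr{C}_2$ with $\langle c_1, c_2\rangle_* \neq 0$. Swapping the roles of $\mathscr{C}_1$ and $\mathscr{C}_2$ and using $\mathscr{C}_1^{\perp_*} \cap \mathscr{C}_2 = \{\mathbf{0}\}$ gives the second clause.

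The proof is really just a definition-chase, so I do not anticipate any genuine obstacle. The only point that needs a moment of care is ensuring that the equivalence ``$c \notin \mathscr{D}^{\perp_*} \iff \exists d \in \mathscr{D},\ \langle c, d\rangle_* \neq 0$'' holds uniformly for the three inner products; in the symplectic case the order of the arguments could in principle matter, but skew-symmetry makes this a non-issue, so the same short argument covers all three settings at once.
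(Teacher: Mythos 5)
Your proof is correct and is essentially the paper's argument: the paper simply states that the lemma ``holds from the definition of dual codes,'' and your write-up is just the full unpacking of that definition-chase, including the (correct) observation that symmetry, conjugate-symmetry, or skew-symmetry of the relevant inner product makes the order of arguments immaterial.
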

	\begin{proof}
		This lemma holds from the definition of dual codes.
	\end{proof}
	\begin{lemma}\label{cyclic_LCD_def}
		Let $\mathscr{C}_1$ and $\mathscr{C}_2$ be two cyclic codes , and separately generated by $g_1(x)$ and $g_2(x)$, where $g_1(x)\mid x^n-1$ and $g_2(x)\mid x^n-1$, then $\mathscr{C}_1$ and $\mathscr{C}_2$ completely Euclidean non-orthogonal to each other is equivalent with $g_1(x)=\tilde{g} _1(x)=g_2(x)$;
		$\mathscr{C}_1$ and $\mathscr{C}_2$ completely Hermitian non-orthogonal to each other is equivalent with $g_1(x)=\tilde{g}^q _1(x)=g_2(x)$.
	\end{lemma}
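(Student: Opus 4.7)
The plan is to invoke the preceding lemma, reducing the claim to the two intersection conditions $\mathscr{C}_1 \cap \mathscr{C}_2^{\perp_e} = \{\mathbf{0}\}$ and $\mathscr{C}_1^{\perp_e} \cap \mathscr{C}_2 = \{\mathbf{0}\}$ in the Euclidean case, and then translating each into a divisibility condition on the generator polynomials. The key background fact I would use is that for two cyclic codes of length $n$ with generator polynomials $a(x)$ and $b(x)$ (both dividing $x^n-1$), their intersection is the cyclic code generated by $\mathrm{lcm}(a(x), b(x))$, and is trivial exactly when this lcm equals $x^n - 1$.

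First I would handle the Euclidean case. Writing $\mathscr{C}_2^{\perp_e} = \langle \tilde{h}_2(x) \rangle$ with $h_2(x) = (x^n-1)/g_2(x)$, the first intersection condition becomes $\mathrm{lcm}(g_1(x), \tilde{h}_2(x)) = x^n - 1$. Under the standard assumption $\gcd(n,q) = 1$ (so that $x^n-1$ is squarefree), the reciprocal map $f \mapsto \tilde{f}$ permutes the irreducible factors of $x^n-1$, and the irreducible factors of $\tilde{h}_2(x)$ are precisely the reciprocals of the irreducible factors of $x^n-1$ that do not divide $g_2(x)$. Checking each irreducible factor of $x^n-1$ in turn converts the lcm condition into the divisibility $\tilde{g}_2(x) \mid g_1(x)$. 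A symmetric argument applied to the second intersection condition yields $\tilde{g}_1(x) \mid g_2(x)$. Since the reciprocal preserves degree, these two divisibilities pinch $\deg g_1 \le \deg g_2 \le \deg g_1$, forcing $\tilde{g}_1(x) = g_2(x)$ and $\tilde{g}_2(x) = g_1(x)$, from which the claimed characterization $g_1(x) = \tilde{g}_1(x) = g_2(x)$ follows. The converse direction is immediate: the hypothesis supplies both divisibilities at once, and hence both intersection conditions hold.

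The Hermitian case would proceed identically, with every occurrence of $\tilde{f}(x)$ replaced by $\tilde{f}^{\,q}(x)$, and using the Hermitian dual description $\mathscr{C}^{\perp_q} = \langle \tilde{h}^{\,q}(x) \rangle$ recorded in the preliminaries. Because the Frobenius twist $f(x) \mapsto f^{\,q}(x)$ commutes with taking reciprocals and with divisibility in $\mathbb{F}_{q^2}[x]$, the same lcm-to-divisibility translation applies verbatim and yields $g_1(x) = \tilde{g}_1^{\,q}(x) = g_2(x)$.

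The main obstacle I expect is the combinatorial bookkeeping that turns ``$\mathrm{lcm}(g_1, \tilde{h}_2) = x^n-1$'' into the clean divisibility statement ``$\tilde{g}_2 \mid g_1$'': one must track carefully how the involution $f \mapsto \tilde{f}$ acts on the factor set of $x^n-1$ and verify that the conclusion is independent of any choice of representatives within reciprocal pairs of irreducible factors. Once that algebraic translation is in place, the degree count and the Hermitian analogue are essentially formal.
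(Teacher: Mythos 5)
Your route is the same as the paper's: reduce via the preceding lemma to the two trivial-intersection conditions, use the fact that the intersection of two cyclic codes is generated by the lcm of their generators, and translate $\mathrm{lcm}(g_1(x),\tilde h_2(x))=x^n-1$ into the divisibility $\tilde g_2(x)\mid g_1(x)$; your bookkeeping with the irreducible factors of a squarefree $x^n-1$ is correct and is in fact more careful than the paper, which simply asserts the divisibilities. The problem is the very last step. The degree count does give $\tilde g_2(x)=g_1(x)$ and $\tilde g_1(x)=g_2(x)$ (up to a nonzero scalar, since $\tilde g$ need not be monic), but these two equalities are reciprocals of one another and carry exactly the same information, namely $g_2(x)=\tilde g_1(x)$. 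They do \emph{not} force $g_1(x)=\tilde g_1(x)$ or $g_1(x)=g_2(x)$, so ``from which the claimed characterization follows'' is a non-sequitur. Concretely, over $\mathbb{F}_2$ with $n=7$ take $g_1(x)=x^3+x+1$ and $g_2(x)=x^3+x^2+1=\tilde g_1(x)$: then $\tilde h_2(x)=(x+1)(x^3+x^2+1)$ and $\mathrm{lcm}(g_1(x),\tilde h_2(x))=x^7-1$, and symmetrically for the other intersection, so the two Hamming codes are completely Euclidean non-orthogonal even though $g_1(x)\neq g_2(x)$ and $g_1(x)\neq\tilde g_1(x)$. What your argument actually establishes is that complete non-orthogonality is equivalent to $g_2(x)=\tilde g_1(x)$ (resp.\ $g_2(x)=\tilde g_1^{q}(x)$ in the Hermitian case), which is strictly weaker than the lemma's statement.

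To be fair, the paper's own proof makes exactly the same jump, writing ``$\tilde g_2(x)\mid g_1(x)$ and $\tilde g_1(x)\mid g_2(x)\Longleftrightarrow g_1(x)=\tilde g_1(x)=g_2(x)$'' without justification, so you have faithfully reproduced the published argument, gap included. But the gap is real and, for the statement as written, cannot be closed: either the conclusion must be weakened to $g_2(x)=\tilde g_1(x)$, or an extra hypothesis such as $g_1(x)=g_2(x)$ must be imposed (which is what effectively happens in the LCD application, where the two codes coincide and $g=\tilde g$ then does follow).
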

	\begin{proof}
		With \cite{huffman2010fundamentals}, $\mathscr{C}_1 \cap \mathscr{C}_2^{\perp_l}$ and $\mathscr{C}_1^{\perp_l} \cap \mathscr{C}_2$ are both cyclic codes generated by $lcm(g_1(x),g_2^{\perp_l}(x))$ and $lcm(g_1^{\perp_l}(x),g_2(x))$, respectively.  Further, $\mathscr{C}_1 \cap \mathscr{C}_2^{\perp_l}=\{\mathbf{0}\}$ and $\mathscr{C}_1^{\perp_l} \cap \mathscr{C}_2=\{\mathbf{0}\}$ yield $lcm(g_1(x),g_2^{\perp_l}(x))\equiv 0 \pmod{x^n-1} $ and $lcm(g_1^{\perp_l}(x),g_2(x))\equiv 0 \pmod{x^n-1} $. Thus, there are two cases, the first is $\tilde{g}_2(x)\mid g_1(x) $ and $\tilde{g}_1(x)\mid g_2(x) \Longleftrightarrow  g_1(x)=\tilde{g} _1(x)=g_2(x)$. The second is $\tilde{g}^q_2(x)\mid g_1(x) $ and $\tilde{g}^q_1(x)\mid g_2(x) \Longleftrightarrow  g_1(x)=\tilde{g}^q _1(x)=g_2(x)$. Hence, we complete the proof.
	\end{proof}

	\section{Quasi-cyclic complementary dual codes}\label{sec4}
This section determines the sufficient and necessary conditions for one-generator quasi-cyclic codes to be LCD codes under Euclidean, Hermitian, and symplectic inner products, starting from Lemma \ref{LCD_def}.

Some symbols used are described below for ease of expression.
Let $g(x)=g_{0}+g_{1} x+g_{2} x+\cdots+ g_{n-1} x^{n-1} \in \mathbb{R}$, $[g(x)]$ denote vector defined by coefficients of $g(x)$ in $\mathbb{F}_{q}^{n}$, i.e. $[g(x)]=[{{g}_{0}},{{g}_{1}},{{g}_{2}},\cdots ,{{g}_{n-1}}]$, and $\bar{g}(x)=x^ng(x^{-1})$.

In order to determine the Euclidean and Hermitian inner products between different polynomials in coefficient vector form, the following two lemmas are crucial.

\begin{lemma} \label{commutative_law}(\cite{galindo2018quasi})
	Let $f(x)$, $g(x)$ and $h(x)$ be polynomials in $\mathbb{R}$. Then the following equation holds for the Euclidean inner product among them: 
	\begin{equation}
		\langle[f(x) g(x)],[h(x)]\rangle_{e}=\langle[g(x)],[\bar{f}(x) h(x)]\rangle_{e}.
	\end{equation}
\end{lemma}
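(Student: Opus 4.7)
The plan is to exploit the fact that multiplication by $x$ in $\mathbb{R}$ corresponds to the right cyclic shift on $\mathbb{F}_q^{n}$, together with the fact that the cyclic shift is an orthogonal (norm-preserving) permutation with respect to the Euclidean inner product. Concretely, writing $\tau$ for the right cyclic shift, one verifies immediately that $[x^{i}a(x)]=\tau^{i}([a(x)])$ for any $a(x)\in\mathbb{R}$, and that
\[
\langle \tau(u),v\rangle_{e}=\langle u,\tau^{-1}(v)\rangle_{e}
\]
for all $u,v\in\mathbb{F}_q^{n}$, since permutation of coordinates merely relabels the summation.

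Next, I would expand $f(x)=\sum_{i=0}^{n-1}f_{i}x^{i}$ and use bilinearity to write
\[
\langle[f(x)g(x)],[h(x)]\rangle_{e}=\sum_{i=0}^{n-1}f_{i}\langle \tau^{i}([g(x)]),[h(x)]\rangle_{e}=\sum_{i=0}^{n-1}f_{i}\langle [g(x)],\tau^{-i}([h(x)])\rangle_{e}.
\]
So the identity will follow once I recognize the right-hand sum as $\langle[g(x)],[\bar{f}(x)h(x)]\rangle_{e}$.

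The last step is a direct computation on $\bar{f}(x)$. By definition $\bar{f}(x)=x^{n}f(x^{-1})=\sum_{i=0}^{n-1}f_{i}x^{n-i}$, and modulo $x^{n}-1$ we have $x^{n-i}\equiv x^{-i}$, so multiplication by $\bar{f}(x)$ in $\mathbb{R}$ realizes precisely $\sum_{i}f_{i}\tau^{-i}$ on coefficient vectors, giving $[\bar{f}(x)h(x)]=\sum_{i}f_{i}\tau^{-i}([h(x)])$. Combining with the previous display completes the proof.

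The only delicate point, and the place where I would be most careful, is the index bookkeeping around $\bar{f}$: one has to check that the degree-$n$ normalization in $\bar{f}(x)=x^{n}f(x^{-1})$ is exactly what is needed to turn the ``adjoint'' shifts $\tau^{-i}$ coming from the inner-product manipulation into honest multiplication in $\mathbb{R}$. Beyond this, the argument is a routine use of linearity and the shift-invariance of $\langle\cdot,\cdot\rangle_{e}$, so I do not expect any conceptual obstacle.
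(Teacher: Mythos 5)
Your argument is correct and complete. Note that the paper itself does not prove this lemma; it is imported by citation from the reference on quasi-cyclic constructions, so there is no in-paper proof to compare against. Your route --- decomposing multiplication by $f(x)$ as $\sum_i f_i\tau^i$ with $\tau$ the cyclic shift, moving $\tau$ across the Euclidean inner product as $\tau^{-1}$, and then checking that $\bar f(x)=x^nf(x^{-1})\equiv\sum_i f_ix^{-i}\pmod{x^n-1}$ realizes exactly the operator $\sum_i f_i\tau^{-i}$ --- is the standard proof of this fact and is equivalent to the direct coefficient computation $\langle[fg],[h]\rangle_e=\sum_{i,j}f_ig_jh_{i+j\bmod n}=\langle[g],[\bar fh]\rangle_e$, or to the observation that the circulant matrix of $\bar f$ is the transpose of that of $f$. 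The one delicate point you flag, the normalization $x^n$ in $\bar f$, is handled correctly since $x^n\equiv1$ in $\mathbb{R}$.
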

\begin{lemma}(\cite{lv2020quantum})\label{exchange_law}
	Let  $f(x)$, $g(x)$  and  $h(x)$  be monic polynomials in  $\mathbb{R}$. Then the following equality of Hermitian inner product of vectors in  $\mathbb{F}_{q^{2}}^{n}$  holds:
	\begin{equation}
		\langle[f(x) g(x)],[h(x)]\rangle_{h}=\langle[g(x)],[\bar{f}^{q}(x) h(x)]\rangle_{h}.
	\end{equation}
\end{lemma}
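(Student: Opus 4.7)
The plan is to mimic the proof of Lemma \ref{commutative_law} while carefully tracking how the Frobenius map $a \mapsto a^q$ interacts with each reindexing. First I would expand the left-hand side using the convolution formula for the product in $\mathbb{R}$: the $k$-th coefficient of $f(x)g(x)$ is $\sum_{i+j \equiv k \pmod{n}} f_i g_j$, so the Hermitian inner product becomes the triple sum $\sum_k \sum_{i+j \equiv k} f_i g_j h_k^q$.

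Next I would reorganize this sum by pulling $g_j$ outside and reindexing the inner sum through $i = k-j$, obtaining $\sum_j g_j \sum_k f_{k-j}\, h_k^q$. The remaining task is to recognize the inner sum as $[\bar{f}^q(x) h(x)]_j^q$. A direct computation gives $[\bar{f}^q(x)]_m = f_{-m}^q$ with indices reduced modulo $n$, whence $[\bar{f}^q(x) h(x)]_j = \sum_k f_{k-j}^q h_k$. Raising this to the $q$-th power and invoking the identity $a^{q^2} = a$ for $a \in \mathbb{F}_{q^2}$ to collapse the double Frobenius yields exactly $\sum_k f_{k-j}\, h_k^q$, which matches the reorganized left-hand side, and plugging back into $\sum_j g_j [\bar{f}^q(x)h(x)]_j^q$ gives the right-hand side.

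The only real subtlety, which distinguishes this from the Euclidean version already recorded in Lemma \ref{commutative_law}, is invoking $a^{q^2}=a$ at precisely the right moment so that the two Frobenius applications cancel to leave a single one acting on $h_k$ rather than on $f_{k-j}$. Everything else is a bookkeeping exercise: keeping the index reversal induced by the bar operation consistent with the coefficient-wise Frobenius, and noting that $\overline{f^q}(x) = (\bar{f})^q(x)$ straight from their definitions so that writing $\bar{f}^q(x)$ is unambiguous. I expect the main obstacle to be purely notational rather than algebraic, namely organizing the three nested summations and the two index shifts cleanly enough that the Frobenius cancellation is visible in one line.
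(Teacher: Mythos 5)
Your proof is correct. Note that the paper itself offers no argument for this lemma --- it is imported verbatim from the cited reference --- so there is no in-paper proof to compare against; your direct expansion is a perfectly good self-contained verification. The computation checks out at every step: with the paper's convention $\bar{f}(x)=x^{n}f(x^{-1})$ one indeed gets $[\bar{f}^{q}(x)]_{m}=f_{-m}^{q}$ (indices mod $n$), hence $[\bar{f}^{q}(x)h(x)]_{j}=\sum_{k}f_{k-j}^{q}h_{k}$, and the outer conjugation in $\langle[g],[\bar{f}^{q}h]\rangle_{h}=\sum_{j}g_{j}\bigl([\bar{f}^{q}h]_{j}\bigr)^{q}$ combines with the inner one via $a^{q^{2}}=a$ to leave $\sum_{j}g_{j}\sum_{k}f_{k-j}h_{k}^{q}$, matching the reindexed left-hand side. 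You correctly identified the one genuinely Hermitian subtlety --- the placement of the Frobenius cancellation --- and the observation that $\overline{f^{q}}=(\bar{f})^{q}$ removes any ambiguity in the notation $\bar{f}^{q}$.
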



	\begin{definition}\label{one_quasi-cyclic_def}
		Let  $g(x)$ and $f_{j}(x)$ be monic polynomials in $\mathbb{R}$, and $g(x)\mid(x^{n}-1)$, $0\le j\le \ell -1$. If $\mathscr{C}$ is a quasi-cyclic code generated by $([g(x){{f}_{0}}(x)]$, $[g(x){{f}_{1}}(x)]$, $\cdots$, $[g(x){{f}_{\ell -1}}(x)])$, then $\mathscr{C}$ is called one-generator quasi-cyclic code with index $\ell$.
		A genrartor matrix $G$ of $\mathscr{C}$  has the following form: 
		\begin{equation}
			G=\left(G_{0}, G_{1}, \cdots, G_{\ell-1}\right),
		\end{equation}
		where $G_ {j} $ are $n\times n$  circulant matrices generated by $ [g (x) f_{j} (x)] $, respectively.
	\end{definition}

	\begin{theorem}\label{onegenearator_sufficient}
		If $\mathscr{C}$ is a one-generator quasi-cyclic code in Definition \ref{one_quasi-cyclic_def}, then the sufficient and necessary conditions for $\mathscr{C}$ to be \textbf{Euclidean LCD} code are
		\begin{equation}
			\left\{
			\begin{array}{c} 
				g(x)=\tilde{g}(x), \\ 
				gcd(\sum\limits_{i=0}^{\ell-1}f_i(x)\bar{f}_i(x),\frac{x^n-1}{g(x)})=1.
			\end{array}
			\right.
		\end{equation}
	\end{theorem}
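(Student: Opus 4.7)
The plan is to rewrite the LCD condition from Lemma \ref{LCD_def} as a polynomial identity in $\mathbb{R}=\mathbb{F}_q[x]/\langle x^n-1\rangle$ and then analyze it through the divisor structure of $x^n-1$. Since $\mathscr{C}$ is generated as an $\mathbb{R}$-module by the single vector $([gf_0],\dots,[gf_{\ell-1}])$, every codeword has the form $c_a=([agf_0],\dots,[agf_{\ell-1}])$ for some $a(x)\in\mathbb{R}$. Setting $F(x)=\sum_{i=0}^{\ell-1}f_i(x)\bar f_i(x)$ and $h(x)=(x^n-1)/g(x)$, a blockwise application of Lemma \ref{commutative_law} shows that $\langle c_a,c_b\rangle_e$ equals the constant coefficient of $\bar a(x)\,b(x)\,g(x)\bar g(x)F(x)$. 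Since cyclic shifts of $c_b$ just replace $b(x)$ by $x^kb(x)$, asking $\langle c_a,c_b\rangle_e=0$ for every $c_b\in\mathscr{C}$ is equivalent to the single identity
\begin{equation*}
a(x)\,g(x)\,\bar g(x)\,F(x)\equiv 0\pmod{x^n-1}.
\end{equation*}
By Lemma \ref{LCD_def}, $\mathscr{C}$ is Euclidean LCD iff this identity forces $c_a$ to be the zero codeword, i.e.\ $(x^n-1)\mid a(x)g(x)f_i(x)$ for every $i$.

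For sufficiency I would assume $g=\tilde g$ and $\gcd(F,h)=1$. Modulo $x^n-1$ we have $\bar g\equiv x^{n-\deg g}\tilde g\equiv x^{n-\deg g}g$, and since $x$ is a unit in $\mathbb{R}$, the identity above reduces to $gh\mid g^2aF$, i.e.\ $h\mid gaF$. Combined with $\gcd(g,h)=1$ and $\gcd(F,h)=1$, this forces $h\mid a$; writing $a=hb$ then gives $agf_i=(x^n-1)bf_i\equiv 0$ for each $i$, so $c_a=0$ and $\mathscr{C}\cap\mathscr{C}^{\perp_e}=\{\mathbf{0}\}$.

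For necessity I would take the contrapositive and exploit the Chinese Remainder decomposition $\mathbb{R}\cong\prod_k\mathbb{F}_q[x]/\langle m_k(x)\rangle$ coming from $x^n-1=\prod_k m_k(x)$, together with the permutation $\sigma$ that the bar involution induces on the irreducible factors. If $g\neq\tilde g$, some $m_k$ divides $g$ while $m_{\sigma(k)}$ does not; choosing by CRT an $a$ supported only in the $\sigma(k)$-component makes $\bar g$ vanish there (because $m_{\sigma(k)}\mid\bar g$ iff $m_k\mid g$), so $ag\bar gF\equiv 0$ automatically, while $a$ and $g$ remain units in that component, so $c_a\neq 0$, contradicting LCD. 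Analogously, if an irreducible $m_k$ divides both $F$ and $h$, then $a$ supported only in the $k$-component satisfies $ag\bar gF\equiv 0$ (since $F$ vanishes there) while $g$ is a unit there, once more producing a nonzero element of $\mathscr{C}\cap\mathscr{C}^{\perp_e}$.

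The main obstacle is ensuring, in the necessity step, that the witness $a$ actually yields a nonzero codeword: one must rule out the degenerate case in which every $f_i$ happens to vanish in the selected component, which would accidentally make $c_a$ zero. The cleanest remedy is either to pick the distinguished component more carefully inside the $\sigma$-orbits of the irreducible factors, or to normalize the generator beforehand so that no common factor of all the $f_i$'s with $h$ remains; once such a reduction is in place, the CRT construction of $a$ produces the required witness and closes both directions.
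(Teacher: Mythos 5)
Your computation---reducing $\langle c_a,c_b\rangle_e$ to the constant coefficient of $\bar a\,b\,g\,\bar g\,F$ and hence to the single congruence $a\,g\,\bar g\,F\equiv 0\pmod{x^n-1}$, with your $F=\sum_i f_i\bar f_i$ and $h=(x^n-1)/g$---is exactly the paper's computation (the paper stops at $\langle[a g F],[b g]\rangle_e$ and then cites Lemmas \ref{LCD_def} and \ref{cyclic_LCD_def}), and your sufficiency argument is correct and complete, modulo the standing assumption $\gcd(n,q)=1$ that both you and the paper need in order to have $\gcd(g,h)=1$.

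The ``degenerate case'' you flag in the necessity direction is not a removable technicality: it is a genuine counterexample to the theorem as stated, so no cleverer choice of CRT component can close the gap. Take $q=2$, $n=7$, $\ell=2$, $g=1$, $f_0=x+1$, $f_1=(x+1)(x^3+x+1)$. This code coincides with the one presented by $g'=x+1$, $f_0'=1$, $f_1'=x^3+x+1$, for which $g'=\tilde g'$ and $f_0'\bar f_0'+f_1'\bar f_1'\equiv 1+\frac{x^7-1}{x+1}$ is coprime to $\frac{x^7-1}{x+1}$, so by your (valid) sufficiency argument the code is Euclidean LCD; yet in the original presentation $x+1$ divides both $f_0\bar f_0+f_1\bar f_1$ and $(x^7-1)/g$, so the stated gcd condition fails. (Smaller still: $\ell=1$, $n=3$, $g=1$, $f_0=x+1$ gives the LCD even-weight $[3,2,2]_2$ code with $\gcd(f_0\bar f_0,x^3-1)=x+1$.) The root cause is that the theorem's conditions depend on the presentation $(g;f_0,\dots,f_{\ell-1})$ and not only on the code: whenever $\gcd(f_0,\dots,f_{\ell-1},h)\ne 1$, every codeword $c_a$ vanishes on the offending CRT component, so the witness $a$ you construct there yields $c_a=0$. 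Your normalization remedy does make the argument go through, but it then proves the statement for the normalized data rather than for the data the theorem quantifies over; a correct necessity statement needs either the extra hypothesis $\gcd(f_0,\dots,f_{\ell-1},h)=1$ or a gcd condition taken relative to the normalized generator. The paper's own proof conceals exactly this point by appealing to Lemma \ref{cyclic_LCD_def}: complete non-orthogonality of the cyclic codes generated by $\gcd(gF,x^n-1)$ and by $g$ is strictly stronger than LCD-ness of $\mathscr{C}$ precisely in this degenerate case.
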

	\begin{proof}
		Suppose $a(x)$,$b(x)$ are any polynomials in $\mathbb{R}$, then any two codewords in $\mathscr{C}$ can be represented as 
		$\bm{c_1}=([a(x)g(x){{f}_{0}}(x)], [a(x)g(x){{f}_{1}}(x)],\cdots,[a(x)g(x){{f}_{\ell -1}}(x)])$ and 
		$\bm{c_2}=([b(x)g(x){{f}_{0}}(x)],[b(x)g(x){{f}_{1}}(x)],\cdots,[b(x)g(x){{f}_{\ell -1}}(x)])$, respectively.
		The Euclidean inner product of $c_1$ and $c_2$ can be expressed as:
		
		$\begin{array}{rl} 
			\langle \bm{c_1},  \bm{c_2}\rangle_{e}=& \sum\limits_{i=0}^{\ell-1} \langle [a(x)g(x){{f}_{i}}(x)],  [b(x)g(x){{f}_{i}}(x)]\rangle_{e}\\ 
			=& \sum\limits_{i=0}^{\ell-1} \langle [a(x)g(x){{f}_{i}}(x)\bar{f}_i(x)],  [b(x)g(x)]\rangle_{e}\\ 
			=&  \langle [a(x)g(x)\sum\limits_{i=0}^{\ell-1}{{f}_{i}}(x)\bar{f}_i(x)],  [b(x)g(x)]\rangle_{e}.\\ 
		\end{array}$
		
		From Lemma \ref{LCD_def} and \ref{cyclic_LCD_def}, it is clear that the sufficient and necessary conditions for $\mathscr{C}$ to be Euclidean LCD code is that $g(x)=\tilde{g}(x)$ and $gcd(\sum\limits_{i=0}^{\ell-1}f_i(x)\bar{f}_i(x),\frac{x^n-1}{g(x)})=1$, so this theorem is proved.
	\end{proof}


		In 	\cite{guneri2016quasi,alahmadi2020complementary}, there are also a sufficient and necessary conditions for a class of one-generator negacirculant codes to be Euclidean LCD, as follows.
		\begin{lemma}(\cite{guneri2016quasi,alahmadi2020complementary})\label{nega-quasi}
			Suppose that $\mathscr{C}_n$ is a one-generator negacirculant code with index $t$ generated by $\left(1, a_{1}(x), \ldots, a_{t-1}(x)\right)$ and $a_i(x) \in \mathbb{F}_{q}[x] /\left\langle x^{n}+1\right\rangle$. For $gcd(n,q)=1$, the sufficient and necessary condition for $\mathscr{C}_n$ to be Euclidean LCD is 
			\begin{equation}
				gcd\left(1+\sum_{y=1}^{t-1} a_{y}(x) a_{y}\left(-x^{n-1}\right), x^{n}+1\right)=1.
			\end{equation}
		\end{lemma}
		
		It is notable that the results for \cite{guneri2016quasi,alahmadi2020complementary} simplify the structure of quasi-cyclic codes, so that Lemma \ref{nega-quasi} can only construct Euclidean LCD codes with a rate of $1/t$, and length of $tn$, where $gcd(n,q)=1$. In contrast, it is possible to construct LCD codes with very flexible length and dimensionality by varying $g(x)$ according to Lemma \ref{onegenearator_sufficient}. For a clearer comparison, we give the following example.
		\begin{example}\label{Compar_ternary}
			Let $q=3$ and $n=13$, only one Euclidean LCD $[26,13,8]_3$ code can be constructed in \cite{alahmadi2020complementary}. Here, with Lemma \ref{onegenearator_sufficient}, choosing different ideals, we get nine good Euclidean LCD codes with parameters: 
			$[26,4,15]_3$, $[26,7,13]_3$, $[26,8,11]_3$, $[26,9,10]_3$, $\bm{[26,12,9]_3}$, $\bm{[26,13,8]_3}$, $\bm{[26,14,7]_3}$, $[26,15,6]_3$, $\bm{[26,20,4]_3}$. The bolded codes are the optimal or best-known according to \cite{Grassltable}.  In addition, we give the constructions for these codes in Appendix A.
			
			In order to show the effectiveness of our method, we also constructed 17 new binary Euclidean LCD codes, whose parameters are given in Table \ref{New_B_LCD}. To save space, the detailed construction methods are given in Appendix B.
			
		\end{example}
	
		\begin{table*}[!t]
		\centering
		\begin{threeparttable}
			\caption{New Binary Quasi-Cyclic Euclidean LCD Codes}\label{New_B_LCD}
			\centering
			\begin{tabular}{ccc}
				\hline
				No. &     Our LCD Codes      & Best LCD Codes in \cite{Li2022ImprovedLA, Wang2023NewCO} \\ \hline
				1  &     $[39,13,12]_2$     &                      $[39,13,11]_2$                      \\
				2  &     $[44,11,16]_2$     &                      $[44,11,15]_2$                      \\
				3  &     $[44,12,15]_2$     &                      $[44,12,14]_2$                      \\
				4  &     $[44,22,9]_2$      &                      $[44,22,8]_2$                       \\
				5  &     $[45,12,16]_2$     &                      $[45,12,15]_2$                      \\
				6  &     $[45,13,15]_2$     &                      $[45,13,14]_2$                      \\
				7  &     $[45,22,10]_2$     &                      $[45,22,9]_2$                       \\
				8  &     $[45,23,9]_2$      &                      $[45,23,8]_2$                       \\
				9  &     $[46,23,10]_2$     &                      $[46,23,9]_2$                       \\
				10  &     $[46,24,9]_2$      &                      $[46,24,8]_2$                       \\
				11  &     $[49,15,15]_2$     &                      $[49,15,14]_2$                      \\
				12  &     $[50,8,21]_2$      &                      $[50,8,20]_2$                       \\
				13  &     $[50,15,16]_2$     &                      $[50,15,14]_2$                      \\
				14  &     $[50,16,15]_2$     &                      $[50,16,14]_2$                      \\
				15  & $[51,8,22]_2^{\star}$  &                            -                             \\
				16  & $[51,16,16]_2^{\star}$ &                            -                             \\
				17  & $[51,17,15]_2^{\star}$ &                            -                             \\ \hline
			\end{tabular}
			\begin{tablenotes}    
				\footnotesize               
				\item[$\star$] Since new binary LCD codes can be derived from these codes, they are also new.        
			\end{tablenotes}            
		\end{threeparttable}       
	\end{table*}
		
	Since the Hermitian inner product and the Euclidean inner product have a similar form, an analogous approach yields the sufficient and necessary conditions for 1-generator quasi-cyclic code to be a Hermitian LCD code. Therefore, we give the following theorem without proof.
	
	\begin{theorem}
		If $\mathscr{C}$ is a one-generator quasi-cyclic code in Definition \ref{one_quasi-cyclic_def}, then the sufficient and necessary conditions for $\mathscr{C}$ to be \textbf{Hermitian LCD} code are 
		\begin{equation}
			\left\{
			\begin{array}{c} 
				g(x)=\tilde{g}^q(x), \\ 
				gcd(\sum\limits_{i=0}^{\ell-1}f_i(x)\bar{f}^q_i(x),\frac{x^n-1}{g(x)})=1.
			\end{array}
			\right.
		\end{equation}		
	\end{theorem}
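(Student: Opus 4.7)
The plan is to mirror the proof of Theorem \ref{onegenearator_sufficient} essentially verbatim, replacing the Euclidean inner product with the Hermitian one and invoking Lemma \ref{exchange_law} in place of Lemma \ref{commutative_law}. First I would parametrize two arbitrary codewords of $\mathscr{C}$ as
\[
c_1 = ([a(x)g(x)f_0(x)], \ldots, [a(x)g(x)f_{\ell-1}(x)]),
\]
\[
c_2 = ([b(x)g(x)f_0(x)], \ldots, [b(x)g(x)f_{\ell-1}(x)]),
\]
for arbitrary $a(x), b(x) \in \mathbb{R}$, exploiting the fact that $\mathscr{C}$ is the $\mathbb{R}$-submodule of $\mathbb{R}^\ell$ generated by the single element $(g(x)f_0(x),\ldots,g(x)f_{\ell-1}(x))$.

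Next I would expand $\langle c_1, c_2\rangle_h$ as a sum of Hermitian inner products over the $\ell$ coordinate blocks. Applying Lemma \ref{exchange_law} to each term moves $f_i(x)$ from the left side to the right side as $\bar{f}_i^{q}(x)$, producing
\[
\langle c_1, c_2\rangle_h = \Big\langle \Big[a(x)g(x)\sum_{i=0}^{\ell-1} f_i(x)\bar{f}_i^{q}(x)\Big], [b(x)g(x)]\Big\rangle_h.
\]
At this point the Hermitian pairing between $c_1$ and any codeword $c_2$ has been rewritten as the Hermitian pairing of two codewords lying in the single cyclic code generated by $g(x)$ (after multiplication by the fixed polynomial $\sum_i f_i(x)\bar{f}_i^{q}(x)$).

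Finally, I would invoke Lemma \ref{LCD_def}: $\mathscr{C}$ is Hermitian LCD iff for every nonzero $c_1\in\mathscr{C}$ there exists $c_2\in\mathscr{C}$ with $\langle c_1,c_2\rangle_h\neq 0$. Via the displayed identity, this non-orthogonality condition becomes a complete Hermitian non-orthogonality statement between two cyclic codes: the cyclic code generated by $g(x)\cdot\sum_{i}f_i(x)\bar{f}_i^{q}(x) \bmod (x^n-1)$ and the cyclic code generated by $g(x)$. Applying the Hermitian half of Lemma \ref{cyclic_LCD_def} then forces $g(x)=\tilde{g}^{q}(x)$ together with
\[
\gcd\Big(\sum_{i=0}^{\ell-1} f_i(x)\bar{f}_i^{q}(x),\;\frac{x^n-1}{g(x)}\Big)=1,
\]
since otherwise a nontrivial common factor would produce nonzero codewords annihilated by the modified pairing.

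The main point to verify carefully is that Lemma \ref{exchange_law} is indeed the correct Hermitian analogue and that the conjugation lands on $f_i$ rather than on $g$ or $a$; once that bookkeeping is right, every remaining step is a direct transcription of the Euclidean argument, so I do not expect any genuine obstacle. The only subtle check is that the two cyclic codes appearing in the final application of Lemma \ref{cyclic_LCD_def} are precisely the ones the lemma requires (both contained in the cyclic code generated by $g(x)$), which is immediate since $g(x)$ divides $x^n-1$ and appears as a factor throughout.
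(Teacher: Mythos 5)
Your proposal is correct and is exactly the argument the paper intends: the paper omits the proof of this theorem, stating only that it follows by the same method as Theorem \ref{onegenearator_sufficient} with the Hermitian inner product, and your transcription using Lemma \ref{exchange_law} in place of Lemma \ref{commutative_law} and the Hermitian case of Lemma \ref{cyclic_LCD_def} is precisely that analogous argument.
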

	
		Let $\{0,1,w,w^2\}$ denote elements in $\mathbb{F}_4$, where $w$ satisfies $w^2 + w + 1 = 0$. We also construct some good quaternary quasi-cyclic Hermitian LCD codes, and six of them are new. Details are listed in Table \ref{Good_Q_LCD}.

	\begin{table*}[!t]
		\centering
		\begin{threeparttable}
			\caption{Good Quaternary Quasi-Cyclic Hermitian LCD Codes}\label{Good_Q_LCD}
			\centering
			\begin{tabular}{ccc}
				\hline
				No. &   Our LCD Codes    &                              Constructions                               \\ \hline
				 1  &    $[10,4,6]_4$    &                      $x + 1,wx^3 + x^2 + wx + w^2$                       \\
				 2  &    $[14,6,7]_4$    &                  $x + 1,w^2x^5 + w^2x^4 + wx^2 + x + 1$                  \\
				 3  &    $[14,7,6]_4$    &                       $1,x^4 + x^3 + wx^2 + x + 1$                       \\
				 4  &   $[18,5,10]_4$    &          $x^4 + x^3 + w^2x^2 + wx + w,wx^4 + wx^2 + wx$           \\
				 5  &    $[20,8,9]_4$    &               $x^2 + 1,x^7 + w^2x^5 + x^4 + wx^3 + wx^2;$                \\
				 6  & $\bm{[20,8,9]_4}$  &                           Shorten $[21,9,9]_4$                           \\
				 7  & $\bm{[21,9,9]_4}$  &                          Shorten $[22,10,9]_4$                           \\
				 8  & $\bm{[22,9,9]_4}$  &                           Extend $[21,9,9]_4$                            \\
				 9  & $\bm{[22,10,9]_4}$ &      $x + 1,wx^9 + w^2x^7 + w^2x^6 + wx^5 + w^2x^4 + wx^3 + x + w;$      \\
				10  & $\bm{[23,10,9]_4}$ &                           Extend $[22,10,9]_4$                           \\
				11  & $\bm{[22,11,8]_4}$ & $1,wx^{10} + x^9 + w^2x^8 + w^2x^6 + x^5 + x^4 + x^3 + x^2 + w^2x + w^2$ \\ \hline
			\end{tabular}
			\begin{tablenotes}    
				\footnotesize               
				\item[]Note: All the quasi-cyclic LCD codes in this table with generator $([g(x)],([g(x)f(x)]))$. Refer to Ref. \cite{ishizuka2022constructionarxiv}, bolded codes are new quaternary Hermitian LCD codes; others reach the best-known lower bound on minimum distances in \cite{ishizuka2022constructionarxiv}.        
			\end{tablenotes}            
		\end{threeparttable}       
	\end{table*}

	\begin{theorem}\label{one_quasi-cyclic}
		If $\mathscr{C}$ is a one-generator quasi-cyclic code in Definition \ref{one_quasi-cyclic_def},  and $\ell$ is even. Let $m=\ell/2$, then $\mathscr{C}$ is a \textbf{symplectic LCD}  code if and only if the following equations hold.
		
		\begin{equation}
			\left\{
			\begin{array}{c} 
				g(x)=\tilde{g}(x), \\ 
				gcd(\sum\limits_{j=0}^{m-1}({{f}_{j}}(x){{\bar{f}}_{m+j}}(x)-{{f}_{m+j}}(x){{\bar{f}}_{j}}(x)),\frac{x^n-1}{g(x)})=1.
			\end{array}
			\right.
		\end{equation}	
	\end{theorem}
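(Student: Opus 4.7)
The plan is to mimic the proof of Theorem \ref{onegenearator_sufficient} but to unfold the symplectic inner product into a sum of two Euclidean pieces, one for each ``half'' of the circulant blocks. By Lemma \ref{LCD_def}, $\mathscr{C}$ is symplectic LCD iff for every nonzero $c_1\in\mathscr{C}$ there exists $c_2\in\mathscr{C}$ with $\langle c_1,c_2\rangle_s\ne 0$. Parameterise two generic codewords by $a(x),b(x)\in\mathbb{R}$, writing $c_1=([agf_0],\ldots,[agf_{\ell-1}])$ and $c_2=([bgf_0],\ldots,[bgf_{\ell-1}])$, and note that since $N=\ell n/2 = mn$ consists of exactly the first $m$ blocks, the symplectic product splits cleanly into pairs indexed by $j$ and $m+j$.

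Next I would carry out the central calculation:
\begin{equation*}
\langle c_1,c_2\rangle_s=\sum_{j=0}^{m-1}\Bigl(\langle[agf_j],[bgf_{m+j}]\rangle_e-\langle[agf_{m+j}],[bgf_j]\rangle_e\Bigr).
\end{equation*}
Invoking Lemma \ref{commutative_law} on each Euclidean term to move the $f$'s onto the $a$ side gives
\begin{equation*}
\langle c_1,c_2\rangle_s=\Bigl\langle\Bigl[a(x)g(x)\sum_{j=0}^{m-1}\bigl(f_j(x)\bar f_{m+j}(x)-f_{m+j}(x)\bar f_j(x)\bigr)\Bigr],[b(x)g(x)]\Bigr\rangle_e.
\end{equation*}
Let $F(x)=\sum_{j=0}^{m-1}(f_j\bar f_{m+j}-f_{m+j}\bar f_j)\bmod(x^n-1)$. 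Then the set of values $\langle c_1,c_2\rangle_s$ as $a,b$ vary is exactly the set of Euclidean inner products between the cyclic code $\mathscr{C}_1=\langle g(x)F(x)\rangle$ and the cyclic code $\mathscr{C}_2=\langle g(x)\rangle$ (viewed inside $\mathbb{R}$).

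The symplectic LCD condition then translates, via Lemma \ref{LCD_def}, to: $\mathscr{C}_1$ and $\mathscr{C}_2$ are completely Euclidean non-orthogonal. By Lemma \ref{cyclic_LCD_def}, this is equivalent to requiring that both generators coincide with a self-reciprocal polynomial dividing $x^n-1$. Since $\mathscr{C}_2$ is generated by (the reduction of) $g(x)$, the first equivalent condition becomes $g(x)=\tilde g(x)$. The second requires $\mathscr{C}_1=\mathscr{C}_2$ as cyclic codes in $\mathbb{R}$, i.e.\ the generator polynomial of $\langle g(x)F(x)\rangle$ equals $g(x)$; because $g\mid x^n-1$, this reduces to $\gcd(F(x),(x^n-1)/g(x))=1$, which is the stated gcd condition.

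The one subtlety I anticipate is the passage from ``completely non-orthogonal'' to the gcd condition when $F(x)$ and $g(x)$ share factors with $x^n-1$: strictly speaking, Lemma \ref{cyclic_LCD_def} is phrased for cyclic codes given by divisors of $x^n-1$, so I should first reduce $g(x)F(x)$ modulo $x^n-1$ and identify the cyclic code it generates as $\langle g(x)\cdot\gcd(F(x),(x^n-1)/g(x))\rangle$ (after absorbing the common factor with $(x^n-1)/g(x)$). Once that identification is made, the equivalence with $g(x)=\tilde g(x)$ and the coprimality condition falls out directly, completing both directions of the iff.
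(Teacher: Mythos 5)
Your proposal is correct and follows essentially the same route as the paper's own proof: parameterising codewords by $a(x),b(x)$, splitting the symplectic product into paired Euclidean terms, applying Lemma \ref{commutative_law} to collapse everything onto $\langle[a g F],[b g]\rangle_e$ with $F=\sum_j(f_j\bar f_{m+j}-f_{m+j}\bar f_j)$, and then invoking Lemmas \ref{LCD_def} and \ref{cyclic_LCD_def}. Your closing remark about reducing $g(x)F(x)$ modulo $x^n-1$ before applying Lemma \ref{cyclic_LCD_def} is a useful clarification of a step the paper leaves implicit, but it does not change the argument.
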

	
	\begin{proof}
		Suppose $a(x)$,$b(x)$ are any polynomials in $\mathbb{R}$, then any two codewords in $\mathscr{C}$ can be represented as 
		$\bm{c_1}=([a(x)g(x){{f}_{0}}(x)], [a(x)g(x){{f}_{1}}(x)],\cdots,[a(x)g(x){{f}_{\ell -1}}(x)])$ and 
		$\bm{c_2}=([b(x)g(x){{f}_{0}}(x)],[b(x)g(x){{f}_{1}}(x)],\cdots,[b(x)g(x){{f}_{\ell -1}}(x)])$, respectively.
		The symplectic inner product of $c_1$ and $c_2$ can be expressed as:
		
		$\begin{array}{rl} 
			\langle \bm{c_1},  \bm{c_2}\rangle_{s}=&  \bm{c_1} \cdot\left(\begin{array}{cc}
				0 & I_{m n} \\
				-I_{m n} & 0
			\end{array}\right) \cdot \bm{c_2}^{T}\\ 
			=&  \sum\limits_{j = 0}^{m - 1} {{{\langle {[ {a(x)g(x){f_j}(x)} ],[ {b(x)g(x){f_{m + j}}(x)} ]} \rangle }_e}} \\ 
			&- \sum\limits_{j = 0}^{m - 1} {{{\langle {[ {a(x)g(x){f_{m + j}}(x)} ],[ {b(x)g(x){f_j}(x)} ]} \rangle }_e}}\\
		 =&\sum\limits_{j = 0}^{m - 1} {{{\langle {[ {a(x)g(x){f_j}(x){{\bar f}_{m + j}}(x)} ],[ {b(x)g(x)} ]} \rangle }_e}} \\
		 & - \sum\limits_{j = 0}^{m - 1} {{{\langle {[ {a(x)g(x){f_{m + j}}(x){{\bar f}_j}(x)} ],[ {b(x)g(x)} ]} \rangle }_e}}\\
		 =&  \langle [ a(x) g (x)\sum\limits_{j = 0}^{m - 1}( {f_j}(x){{\bar f}_{m + j}}(x) - {f_{m + j}}(x){{\bar f}_j}(x) ) ],[b(x)g(x)] \rangle_e.  
		\end{array}$
	

		From Lemma \ref{LCD_def} and \ref{cyclic_LCD_def}, it is clear that the sufficient and necessary conditions for $\mathscr{C}$ to be symplectic LCD code is that $g(x)=\tilde{g}(x)$ and $gcd(\sum\limits_{j=0}^{m-1}({{f}_{j}}(x){{\bar{f}}_{m+j}}(x)-{{f}_{m+j}}(x){{\bar{f}}_{j}}(x)),\frac{x^n-1}{g(x)})=1$, so this theorem is proved.
	\end{proof}

	Theorem \ref{one_quasi-cyclic} determines the sufficient and sufficient conditions for one-generator quasi-cyclic code of even index to be symplectic LCD.
	In \cite{Huang2022}, Huang et al. also determined the condition for quasi-cyclic code $\mathscr{C}$ generated by $([g(x)],[g(x)f(x)])$ to be symplectic LCD code by deriving the relationship between $\mathscr{C}$ and its symplectic dual code $\mathscr{C}^{\perp_s}$. 
	
	\begin{lemma}(\cite{Huang2022}, Theorem 3.1) \label{Huang_QC}
		Let  $\mathscr{C}(f, g)$  be a  quasi-cyclic code over  $\mathbb{F}_{q}$  of length  $2 n$  generated by  $(g(x), f(x) g(x))$, where  $gcd\left(g(x), g^{\perp}(x)\right)=1$, $h(x) \mid g^{\perp}(x), \operatorname{gcd}(h(x), \bar{f}(x)-f(x))=1 $. Then  $\mathscr{C}(f, g)$  is a symplectic  LCD  code.
	\end{lemma}
	However, Lemma \ref{Huang_QC} is only a sufficient condition. By Theorem \ref{one_quasi-cyclic}, the sufficient and necessary conditions for $\mathscr{C}(f, g)$ to by symplectic LCD are as the following equation.
	\begin{equation}
		\left\{
		\begin{array}{c} 
			g(x)=\tilde{g}(x), \\ 
			gcd(f(x)-\bar{f}(x),\frac{x^n-1}{g(x)})=1.
		\end{array}
		\right.
	\end{equation}		

	Since binary symplectic inner product and quaternary trace Hermitian inner product are equivalent, one crucial motivation for symplectic LCD codes is to construct trace Hermitian ACD codes that have better performance than quaternary Hermitian LCD codes.
	The following two examples demonstrate how trace Hermitian ACD codes can be constructed to outperform quaternary LCD codes. For more, we refer the readers to see \cite{Guansomeadditive}.
		\begin{example}
		Set $q=2$, $n=7$, $\ell=4$. Let $g(x)=x + 1$, $f_1(x)=x^5 + x^3 + x^2$, $f_2(x)=x^4 + x^3$, $f_3(x)=x^6 + x^5 + x^4 + x$. 
		One can easy to check that $g(x)=\tilde{g}(x)$, $gcd(\sum\limits_{j=0}^{1}({{f}_{j}}(x){{\bar{f}}_{1+j}}(x)-{{f}_{1+j}}(x){{\bar{f}}_{j}}(x)),\frac{x^n-1}{g(x)})=1$, so $([g(x)f_0(x)],[g(x)f_1(x)],[g(x)f_2(x)]$, $[g(x)f_3(x)])$ can generate a 1-generator quasi-cyclic symplectic LCD code. Then, using Magma \cite{bosma1997magma} we can calculate this code have parameters $[28,6,10]_2^s$, whose symplectic weight distribution is  
		$\mathbf{w}_s(z)=1+35z^{10}+21z^{11}+7z^{13}$.
		Therefore, a trace Hermitian ACD code with parameters $(14,3,10)_4$ exists. It should be noted that the \textbf{optimal Hermitian LCD code} in \cite{ArayaQuaternaryLCD} with length 14 and dimension 3 has parameters $[14,3,9]_4$, so our symplectic construction has better performance.
	\end{example}

	\begin{example}
		Set $q=2$, $n=21$, , $\ell=2$. Let $g(x)=x^3 + 1$, $f_0(x)=x^{18} + x^{16} + x^{15} + x^{14} + x^{13} + x^{12} + x^8 + x^7 + x^3 + 1$, $f_1(x)=x^{20} + x^{19} + x^{18} + x^{15} + x^{14} + x^9 + x^7 + x^5 + x^3 + x^2 + 1$. 
		One can easy to check that $g(x)=\tilde{g}(x)$, $gcd(f_0(x)\bar{f}_1(x)-f_1(x)\bar{f}_0(x),\frac{x^n-1}{g(x)})=1$, so $([g(x)f_0(x)],[g(x)f_1(x)])$ can generate a 1-generator quasi-cyclic symplectic LCD code. Then, using Magma \cite{bosma1997magma} we can calculate this code have parameters $[42,18,9]_2^s$, whose symplectic weight distribution is  
		$\mathbf{w}_s(z)=1+448z^{9}+1344z^{10}+3906z^{11}+9051z^{12}+18753z^{13}+\dots+609z^{21}$.
		Therefore, a trace Hermitian ACD code with parameters $(21,9,9)_4$ exists. It should be noted that the best known Hermitian LCD code in \cite{ishizuka2022construction} with length 21 and dimension 9 has parameters $[21,9,8]_4$, so our symplectic construction has better performance.
	\end{example}

	\section{Conclusion}\label{sec6}
	
	In this work, we propose a new characterization for LCD codes, which allows us to determine the complementary duality of linear codes from the codeword level. Furthermore, depending on this result, we determine the sufficient and  necessary conditions for one-generator quasi-cyclic codes to be LCD codes concerning Euclidean, Hermitian, and symplectic inner products. Finally, we construct many Euclidean, Hermitian, and symplectic quasi-cyclic LCD codes to show that quasi-cyclic codes can be utilized to construct good LCD codes. 
	
	In the future, one possible extension would be to consider the sufficient and necessary condition for $h$-generator quasi-cyclic codes to be LCD. On the other hand, it will be an interesting and challenging problem to construct trace Hermitian ACD codes superior to optimal Hermitian LCD codes.

	\section*{Conflict of Interest}
	
	The authors have no conflicts of interest to declare that are relevant to the content of this article.
	
	\section*{Data Deposition Information}
	Our data can be obtained from the authors upon reasonable request.
	
	
\bibliographystyle{ieeetr}
\bibliography{reference}

	\section*{Appendix} 
	
	All the quasi-cyclic codes in this appendix with generators $([g(x)],[g(x)f(x)])$ or $([g(x)]$, $[g(x)f_1(x),[g(x)f_2(x)])$.
	Using Magma notation \texttt{QuasiCyclicCode(2*n, [g(x)],[g(x)*f(x)])} or 
	\texttt{QuasiCyclicCode(3*n, [g(x)],[g(x)*$f_1$(x)],[g(x)*$f_2$(x)])} , one can directly obtain the corresponding  LCD codes.

	\subsection*{A: Generators of ternary quasi-cyclic Euclidean LCD codes in Example \ref{Compar_ternary} }
	
	Ternary quasi-cyclic Euclidean LCD codes in Example \ref{Compar_ternary} with generator $([g(x)],[g(x)f(x)])$. Let $\{0,1,2\}$ denote elements in $\mathbb{F}_3$.  Set $n=13$, $q=3$ and index $\ell=2$. details are as follows.
	\begin{enumerate}
		\item $[26,4,15]_3$: $g(x)=2x^9 + 2x^8 + x^7 + x^6 + 2x^5 + x^4 + 2x^3 + 2x^2 + x + 1; f(x)=2x^3 + x^2 + 2x + 1;$
		\item 	$[26,6,13]_3$ and its dual $[26,20,4]_3$: $g(x)=2x^7 + 2x^5 + 2x^4 + x^3 + x^2 + 1;
		f(x)=x^5 + 2x^4 + x^2 + 2x + 1;$	
		\item 	$[26,7,13]_3$: $g(x)=2x^6 + 2x^5 + x^4 + x^2 + 2x + 2; f(x)=x^6 + 2x^5 + 2x^4 + 2x^3 + 2x;$
		\item 	$[26,8,11]_3$: $g(x)=x^5 + x^4 + x^3 + 2x^2 + 2x + 2; f(x)=x^7 + x^6 + x^5 + 2x^4 + x^3 + 1;$								
		\item 	$[26,9,10]_3$: $g(x)=x^4 + 2x^3 + 2x + 1; f(x)=2x^8 + 2x^7 + 2x^6 + 2x^5 + 2x^4 + x^2 + x + 2;$
		\item 	$[26,12,9]_3$  and its dual $[26,14,7]_3$: $g(x)=x + 2; f(x)=2x^{11} + 2x^{10} + x^7 + 2x^5 + 2x^3 + x^2 + 2x + 2;$		
		\item 	$[26,13,8]_3$: $g(x)=1; f(x)=x^{12} + x^9 + 2x^7 + 2x^6 + 2x^5 + 2x^4 + 2x^3 + x^2 + 2x;$				
		\item 	$[26,11,7]_3$ and its dual $[26,15,6]_3$: $g(x)=x^2 + x + 1; f(x)=2x^{10} + 2x^9 + x^7 + 2x^6 + x^5 + 2x^2 + x + 1;$						
	\end{enumerate}
	
	\subsection*{B: Generators of binary quasi-cyclic Euclidean LCD codes in Table \ref{New_B_LCD} }

\begin{enumerate}
	\item $[39,13,12]_2$: $n=13$, $\ell=3$, $g(x)=1$, $f_1(x)=x^{12} + x^7 + x^3 + x + 1$,  $f_2(x)=x^{12} + x^{11} + x^9 + x^8 + x^5 + x^3 + x^2$. 
	\item 	$[45,12,16]_2$: $n=15$, $\ell=3$, $g(x)=x^3 + 1,
	f_1(x)=x^{14} + x^{11} + x^{10} + x^8 + x^7 + x^6 + x^4 + x, f_2(x)=x^{13} + x^{11} + x^9 + x^3 + x^2 + 1;$. 	
	
	\item 	$[45,13,15]_2$: $n=15$, $\ell=3$, $g(x)=x^2 + x + 1,
	f_1(x)=x^{13} + x^4 + x^3 + x,
	f_2(x)=x^{14} + x^{12} + x^{11} + x^9 + x^5 + x^4 + x^3 + x;$
	
	\item 	$[44,22,9]_2$: $n=22$, $\ell=2$, 
	$g(x)=1, f(x)=x^{20} + x^{18} + x^{17} + x^{16} + x^{14} + x^{12} + x^{10} + x^9 + x^8 + x^6 + x^4 + x^3;$	
	
	\item 	$[46,23,10]_2$: $n=23$, $\ell=2$, 
$g(x)=1, f(x)=x^{18} + x^{16} + x^{13} + x^{12} + x^{11} + x^{10} + x^9 + x^6 + x^5 + x;$

	\item 	$[46,22,10]_2$ and its dual $[46,24,9]_2$: $n=23$, $\ell=2$, 
	$g(x)=x + 1,
	f(x)=x^{22} + x^{19} + x^{17} + x^{15} + x^{14} + x^{13} + x^{11} + x^6 + x^4 + x^3 + x + 1;$

	\item 	$[51,8,22]_2$: $n=17$, $\ell=3$, 
	$g(x)=x^9 + x^6 + x^5 + x^4 + x^3 + 1,
	f_1(x)=x^{16} + x^{15} + x^{14} + x^{12} + x^{11} + x^8 + x^2,
	f_2(x)=x^{16} + x^{15} + x^{13} + x^{11} + x^9;$		
								
	\item 	$[51,16,16]_2$: $n=17$, $\ell=3$, 
		$g(x)=x + 1,
		f_1(x)=x^{15} + x^{14} + x^{12} + x^{11} + x^9 + x^8 + x^5 + x^4 + x^3 + x,
		f_2(x)=x^{15} + x^{14} + x^{13} + x^6 + x^5 + x^4 + x^3 + x^2;$

	\item 	$[51,17,15]_2$: $n=17$, $\ell=3$, 
	$g(x)=1,f_1(x)=x^{15} + x^{14} + x^{10} + x^8 + x^7 + x^6 + x^5 + x^3,f_2(x)=x^{14} + x^{12} + x^{11} + x^{10} + x^8 + x^7 + x^6 + x^5;$					
\end{enumerate}

\end{document}